\documentclass[lineno]{preprint}

\usepackage{amsmath}

\usepackage{amsfonts}
\usepackage{graphicx}
\usepackage{epstopdf}
\usepackage{algorithmic}
\usepackage{amsmath}
\usepackage{amssymb}
\usepackage{tikz}
\usepackage{pgfplots}
\pgfplotsset{compat=1.18}
\usepackage{newtxtext}
\usepackage[subscriptcorrection]{newtxmath}
\usepackage{float} 
\usepackage{caption}

\graphicspath{{./art/}}

\usepackage[plain,noend]{algorithm2e}

\makeatletter
\renewcommand{\algocf@captiontext}[2]{#1\algocf@typo. \AlCapFnt{}#2} 
\def\@algocf@capt@plain{top}
\renewcommand{\algocf@makecaption}[2]{%
  \addtolength{\hsize}{\algomargin}%
  \sbox\@tempboxa{\algocf@captiontext{#1}{#2}}%
  \ifdim\wd\@tempboxa >\hsize
    \hskip .5\algomargin%
    \parbox[t]{\hsize}{\algocf@captiontext{#1}{#2}}
  \else%
    \global\@minipagefalse%
    \hbox to\hsize{\box\@tempboxa}
  \fi%
  \addtolength{\hsize}{-\algomargin}%
}
\makeatother

\begin{document}


\title{A Separation Method for Quartic Positivity and the Valid Region of Gram--Charlier densities}
\author{Taehun Kim}
\affil{Interdisciplinary Program in Bioengineering, Graduate School, Seoul National University, Seoul 08826, Republic of Korea.
\email{fealty@snu.ac.kr}}

\author{Jung Chan Lee}
\affil{Department of Biomedical Engineering, Seoul National University College of Medicine and Seoul National University Hospital, Seoul 03080, Republic of Korea.\email{ljch@snu.ac.kr}}

\author{ByoungSeon Choi}
\affil{Graduate School of Data Science, Seoul National University, Seoul 08826, Republic of Korea.
\email{bschoi12@snu.ac.kr}
}

\maketitle

\begin{abstract}
The positivity of the Gram--Charlier probability density function has been a subject of extensive study for decades. Since \cite{barton1952conditions} introduced numerical positivity conditions, no analytic closed-form expression was available until \citeauthor{kwon2019positive} (\citeyear{kwon2019positive}, \citeyear{kwon2022analytic}) proposed analytic solutions for the valid region of Gram--Charlier densities. Despite the significance of the analytical solutions, the expressions remain algebraically complex. As these conditions for the Gram--Charlier densities are determined by a quartic polynomial, it is essential to investigate its positivity.
In this work, necessary and sufficient conditions for the positivity of a quartic polynomial are derived through a separation method.
Based on these conditions, more concise analytic expressions for the positivity of the Gram--Charlier density are proposed.
\end{abstract}

\begin{keywords}
Gram--Charlier densities, Positivity, Quartic Polynomials.
\end{keywords}

\section{Introduction}

While numerous natural phenomena conform to the Gaussian distribution, many domains exhibit significant deviations, especially in the tails. These heavy-tailed characteristics present difficulties for modeling extreme events. The Gram--Charlier series expansion emerges as a natural extension, effectively capturing such non-Gaussian features through higher-order corrections to the Gaussian density \citep{jondeau2001gram}. Numerous studies have investigated Gram--Charlier densities, e.g., \cite{SHENTON1951gramcharlierbiometrika}.

The Gram--Charlier series expansion employs Hermite polynomials to incorporate third-order (skewness) and fourth-order (kurtosis) information. While this enhances the ability to represent tails and skewness compared to the Gaussian density, the presence of the associated quartic polynomial introduces the possibility of non-positive values. This issue has motivated needs to address the positivity conditions of the quartic function.
\section{Quartic Positivity}
\subsection{Related Works}
The positivity conditions of a quartic polynomial have been studied for decades. See, e.g.,\ \cite{rees1922graphical}, \cite{ku1965automatic}, 
\cite{lazard1988quantifier}, 
\cite{hasan2002procedure}, \cite{jury1981positivity}, \cite{qi2020positivity}, and 
\cite{song2021analytical}. 
Even though \cite{rees1922graphical} and \cite{lazard1988quantifier} provide a closed-form expressions via four inequalities, their complexity makes practical implementation difficult. Yet symbolic elimination becomes intractable, particularly as the discriminant of the associated quartic equation manifests in a high-degree polynomial form, further complicating the positivity domain.

We introduce a novel theorem that simplifies the positivity conditions of a quartic equation, yielding a more tractable representation.

\subsection{Conditions}
Our aim is to provide a characterization of the positivity of a reduced quartic polynomial, i.e.,\ 
\begin{equation*}
\mathbf{(C0)} \quad f(x) = x^4+px^2+qx+r>0 \quad \forall x \in \mathbb{R}.
\end{equation*}
Following Ferrari's method for solving a quartic equation, published by his teacher \cite{cardano1545arsmagna}, we introduce a free parameter $m$ and define two functions as
\begin{align*}
h_m(x) &\doteq \left(x^2+\frac{p}{2}+m\right)^2 \geq 0 \quad \forall x \in \mathbb{R}, \\
g_m(x) &\doteq 2mx^2-qx+\left(m+\frac{p}{2}\right)^2-r.
\end{align*}
Clearly, 
\begin{equation*}
f(x)=h_m(x)+\left\{-g_m(x)\right\}.
\end{equation*}
\begin{figure}[htbp]
\centering
\begin{tikzpicture}
\begin{axis}[
    legend pos=south west,
    axis lines = middle,
    xlabel = \(x\),
    ylabel = \(y\),
    domain = -3:3,
    ymin = -10,
    ymax = 10,
    samples = 100,
    width=14cm,
    height=8.3cm,
]
\addplot [blue, thick] { (x^2-0.5625)^2 };
\addplot [red, thick] { -0.75*(x-0.1)^2+1.125*(x-0.1)-0.421875-0.3 };
\addplot [dash pattern=on 12pt off 4pt on 12pt off 4pt, black, line width=2pt] {0};
\node[text width=3cm,scale=2.0] at (2.0,9) {$h_m(x)$};
\node[text width=3cm,scale=2.0] at (3.0,-5) {$g_m(x)$};
\node[text width=3cm,scale=1.3] at (-2.0,1.1) {Separation line};
\end{axis}
\end{tikzpicture}
\caption{Functions $h_m(x)$ and $g_m(x)$, and the separation line.
\\
\textbf{Alt text}: Plot illustrating the separation argument: $h_m(x)$ remains nonnegative, 
$g_m(x)$ remains strictly negative, and the horizontal line $y=0$ separates the 
two functions.}
\label{fig:figure1}
\end{figure}
\\
If there exists a negative $m$ satisfying $g_m(x)<0$ for any $x \in \mathbb{R}$, then $\mathbf{(C0)}$ satisfies. At that case the horizontal line $y=0$ is a separation line between ${h_m(x)}$ and ${g_m(x)}$ as shown in Fig. \ref{fig:figure1}.
Using $h_m(x)$ and $g_m(x)$, we present some necessary and sufficient conditions for the positivity of a quartic polynomial.
\begin{theorem}\label{theorem1} The Followings Are Equivalent:
\begin{align*}
&\mathbf{(C0)} \quad \text{The reduced quartic polynomial is positive, i.e.,\ }\\
&\quad \quad \quad \; f(x) = x^4 + p x^2 + q x + r > 0 \quad \forall x \in \mathbb{R}.\\
&\mathbf{(C1)}\quad \text{There exists a negative }m \text{ such that } D(m) < 0, \text{ where }D(m) \text{ is defined as}\\
& \quad \quad \quad \; D(m) \doteq -8m\left(m^2+pm+\frac{p^2}{4}-r\right)+q^2,\\
& \quad \quad \quad  \text{ which is the discriminant of } g_m(x).\\
&\mathbf{(C2)} \quad \text{The cubic equation } D(m) = 0 \text{ has at least two distinct real roots,}\\
&\quad \quad \quad \; \text{and at least one of them is a negative single root}.\\
&\mathbf{(C3)} \quad \text{The quadratic equation } D'(m) =0  \text{ has two distinct real roots } \\
&\quad \quad \quad \; m_- \text{ and } m_+ (> m_-) \text{ satisfying } \\
&\quad \quad \quad \quad \quad \quad \quad \quad \quad \quad \; m_- < 0,\ D(m_-) < 0,\ \text{and } D(m_+) \geq 0.\\
&\mathbf{(C4)} \quad \text{Let }\\
& \quad \quad \quad \quad \quad \, \Delta \;\; \doteq 16p^4r - 4p^3q^2 - 128p^2r^2 + 144pq^2r - 27q^4 + 256r^3, \\
& \quad \quad \quad \quad \quad \, \Delta_D \doteq 4r-p^2, \\
& \quad \quad \quad \quad \quad \, \Delta_P \doteq p,\\
& \quad \quad \quad \quad \quad \, \Delta_Q \doteq q,\\
&\quad \quad \quad \; \text{then the triplet (p, q, r) belongs to the set }\\
&\quad \quad \quad \quad \quad \quad \quad \quad  \left\{\{\Delta > 0\} \cap \bigl\{ (\Delta_D > 0) \cup(\Delta_P > 0) \bigr\} \right\}
\\
&\quad \quad \quad \quad \quad \quad \quad \quad \quad \quad \quad \quad \quad \; \mathrm{or} \text{ the set}
\\
&\quad \quad \quad \quad \quad \quad \quad \quad \;\; \{(\Delta_D=0) \cap (\Delta_P>0) \cap (\Delta_Q=0)\}. \\
&\mathbf{(C5)} \quad \text{There exists a negative }m \text{ such that } h_m(x) \geq 0 \: \text{ for any } x \in \mathbb{R} \; \text{ and that} \\ 
&\quad \quad \quad \; g_m(x) < 0 \: \text{ for any } x \in \mathbb{R}.\\
\end{align*}
\end{theorem}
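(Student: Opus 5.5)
The plan is to run everything through the chain $\mathbf{(C5)}\Leftrightarrow\mathbf{(C1)}\Leftrightarrow\mathbf{(C0)}$, then attach $\mathbf{(C2)}$ and $\mathbf{(C3)}$ to $\mathbf{(C1)}$ by elementary cubic-shape arguments, and finally attach $\mathbf{(C4)}$ to $\mathbf{(C0)}$ through the root structure of $f$ and of the resolvent $D$. Throughout we use
\[
D(m)=-8m^3-8pm^2-2(p^2-4r)m+q^2 .
\]
This cubic has leading coefficient $-8$, so $D(m)\to+\infty$ as $m\to-\infty$, and $D(0)=q^2\ge 0$.

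\textbf{The easy links.} For $m<0$, $h_m\ge0$ holds automatically. Also $g_m$ has negative leading coefficient $2m$, so $g_m<0$ on $\mathbb{R}$ if and only if its discriminant $D(m)$ is negative. This gives $\mathbf{(C1)}\Leftrightarrow\mathbf{(C5)}$. The implication $\mathbf{(C5)}\Rightarrow\mathbf{(C0)}$ follows from $f=h_m-g_m>0$.

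For $\mathbf{(C1)}\Rightarrow\mathbf{(C3)}$, suppose $D(m^*)<0$ with $m^*<0$. Since $D(-\infty)=+\infty$ and $D(0)\ge0$, the function $D$ cannot be monotone, so $D'$ has two distinct roots $m_-<m_+$. Here $m_-$ is the local minimum, and $D(m_-)\le D(m^*)<0$; we also get $m_-<0$, because $D$ is decreasing to the left of $m_-$. If $m_+\le0$, then $D$ decreases on $(m_+,0)$, so $D(m_+)\ge D(0)\ge0$. If $m_+>0$, then $D$ increases on $(m_-,m_+)\ni0$, so again $D(m_+)\ge D(0)\ge0$. The converse $\mathbf{(C3)}\Rightarrow\mathbf{(C1)}$ is immediate by taking $m=m_-$.

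For $\mathbf{(C1)}\Rightarrow\mathbf{(C2)}$, the sign pattern $+,-,\ge0$ on $(-\infty,m^*,0]$ forces a root in $(-\infty,m^*)$ and another root in $(m^*,0]$. The leftmost root is a sign change from $+$ to $-$, so it has odd multiplicity. It cannot be triple because a second root exists, so it is a negative single root. For $\mathbf{(C2)}\Rightarrow\mathbf{(C1)}$, $D$ changes sign at a negative single root, so $D<0$ on one side of it arbitrarily close by, which gives a negative $m$ with $D(m)<0$.

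\textbf{$\mathbf{(C0)}\Rightarrow\mathbf{(C1)}$ via Ferrari pairings.} If $D(m)=0$, then $g_m$ is a perfect square, and $f$ factors as $(x^2+\tfrac p2+m-\sqrt{2m}\,\ell(x))(x^2+\tfrac p2+m+\sqrt{2m}\,\ell(x))$ with $\ell$ linear. Comparing coefficients shows that the three roots of $D$ are $m=s^2/2$, where $s=x_i+x_j$ runs over the three ways of splitting the roots $x_1,\dots,x_4$ of $f$ into pairs. One should check this against Vieta:
\[
\textstyle\sum m_k=-p,\qquad \sum m_km_l=(p^2-4r)/4,\qquad m_1m_2m_3=q^2/8 .
\]
If $f>0$, its roots are $\alpha\pm i\beta$ and $-\alpha\pm i\gamma$ with $\beta,\gamma>0$, since the roots sum to zero. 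The three pair sums give
\[
m_3=2\alpha^2\ge0,\qquad m_1=-\tfrac{(\beta+\gamma)^2}{2},\qquad m_2=-\tfrac{(\beta-\gamma)^2}{2}.
\]
Thus $m_1<m_2\le0$, and since $D$ has leading coefficient $-8$, $D<0$ on $(m_1,m_2)$. That interval is a nonempty set of negative $m$.

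\textbf{$\mathbf{(C0)}\Leftrightarrow\mathbf{(C4)}$.} This is the main obstacle. $\Delta$ is the discriminant of $f$, and it equals the discriminant of $D$ up to a positive factor. We split by the sign of $\Delta$.

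If $\Delta<0$, then $f$ has exactly two simple real roots, so it changes sign and $\mathbf{(C0)}$ fails.

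If $\Delta=0$, then $f$ has a repeated root. Positivity is then only possible if $f=(x^2+a)^2$ with $a>0$, which means $q=0$, $p=2a>0$ and $4r=p^2$. This is exactly the second set in $\mathbf{(C4)}$.

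If $\Delta>0$, then $f$ has either four distinct real roots or none, and $D$ has three distinct real roots. By the pairing description, four real roots is equivalent to all $m_k\ge0$, and in fact $m_k>0$ by distinctness, since a zero pair sum forces $q=0$ and then a symmetric root configuration. For three real roots, Vieta and Descartes show that all roots are positive iff $-p>0$ and $p^2-4r>0$, i.e.\ $\Delta_P<0$ and $\Delta_D<0$. I expect the boundary cases $p=0$ or $4r=p^2$ to need a separate check here. Hence, within $\{\Delta>0\}$, no real roots, which is $\mathbf{(C0)}$, holds iff $\Delta_D>0$ or $\Delta_P>0$. The boundary checks should show that $\Delta>0$ together with $\Delta_D=0$ or $\Delta_P=0$ cannot give four real roots, or else is already covered by the other strict inequality.
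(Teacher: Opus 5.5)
\textbf{The gap.} Most of your proposal is fine: $\mathbf{(C1)}\Leftrightarrow\mathbf{(C5)}$, $\mathbf{(C1)}\Leftrightarrow\mathbf{(C2)}\Leftrightarrow\mathbf{(C3)}$, the pairing proof of $\mathbf{(C0)}\Rightarrow\mathbf{(C1)}$, and the cases $\Delta<0$ and $\Delta=0$ all work. One small addition is needed in $\mathbf{(C1)}\Rightarrow\mathbf{(C3)}$: state that $m^*<m_+$, since otherwise $D(m^*)\ge D(0)\ge 0$; only then does $D(m_-)\le D(m^*)$ follow.

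The step that fails is the case $\Delta>0$, which you yourself call the main obstacle. Your claim that distinctness forces every $m_k>0$ is false. A zero pair sum does force $q=0$ and roots $\pm a,\pm b$, but that is compatible with four distinct roots. For $f=x^4-5x^2+4=(x^2-1)(x^2-4)$ one has $\Delta=5184>0$, while $D(m)=-8m\bigl(m^2-5m+\tfrac94\bigr)$ has roots $0,\tfrac12,\tfrac92$. The same example refutes your criterion ``all roots of $D$ are positive iff $-p>0$ and $p^2-4r>0$'': strict positivity would also need $m_1m_2m_3=q^2/8>0$. So, as written, the equivalence inside $\{\Delta>0\}$ is not established, and the boundary checks you defer remain open.

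\textbf{The repair.} Never pass to strict inequalities. Under $\Delta>0$ the cubic $-D(m)/8=m^3+pm^2+\tfrac{p^2-4r}{4}m-\tfrac{q^2}{8}$ has three real roots. Its roots are therefore all $\ge 0$ iff $-p\ge 0$ and $p^2-4r\ge 0$. The condition $q^2/8\ge 0$ is automatic, and for $m<0$ every term is $\le 0$ while $m^3<0$, so there is no negative root. Combine this with your correct equivalence ``four real roots $\iff$ all $m_k\ge 0$''. Then within $\{\Delta>0\}$, $f$ has real roots iff $\Delta_P\le 0$ and $\Delta_D\le 0$, i.e.\ $\mathbf{(C0)}$ holds iff $(\Delta_D>0)\cup(\Delta_P>0)$. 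No boundary analysis is needed. Your guessed boundary facts are in fact true, since $\Delta=256r^3-27q^4$ when $p=0$ and $\Delta=q^2(32p^3-27q^2)$ when $4r=p^2$, but they become superfluous.

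\textbf{Comparison with the paper.} With this fix your argument is complete and follows a different route from the paper. For $q\ne 0$, the paper proves $\mathbf{(C0)}\Rightarrow\mathbf{(C1)}$ by evaluating $f$ at $q/(4m)$, taking $m$ to be a negative root of $a(m)=8m^2(2m+p)+q^2$. It reaches $\mathbf{(C4)}$ from $\mathbf{(C3)}$ by computing $D(m_\pm)$ in radicals, using $D(m_-)D(m_+)=-\Delta/27$ and $\{m_-<0\}=\{\Delta_P>0\}\cup\{\Delta_D>0\}$. Your identification of the roots of $D$ with $(x_i+x_j)^2/2$ explains conceptually why a negative root of $D$ certifies positivity. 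The price is that you import the classical classification of a quartic's real roots by the sign of its discriminant, which the paper's direct computation avoids.
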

\begin{proof}
$\\(\mathbf{C0} \Rightarrow \mathbf{C1})$

Consider the case $q=0$. Condition $\mathbf{(C0)}$ implies $r=f(0)>0$. The equation $D(m)=0$ has three real roots $\displaystyle -p/2-{r}^{1/2}$, $\displaystyle -p/2+{r}^{1/2}$, and $0$. If $p\geq 0$, then $\displaystyle -p/2-{r}^{1/2}$ is a negative root of $D(m)=0$. If $p<0$, $\mathbf{(C0)}$ implies 
\begin{equation*}
f(x) \geq f\left(\pm\left(-\frac{p}{2}\right)^{1/2}\right) = r-\frac{p^2}{4}>0,
\end{equation*}
and then $\displaystyle -p/2-{r}^{1/2}$ is a negative root. We choose $m_*$ such that
\begin{equation*}
m_* \in \left( -\frac{p}{2}-{r}^{1/2}, \min\left\{0, -\frac{p}{2} + {r}^{1/2}\right\} \right).
\end{equation*}
Then, $D(m_*)<0.$ \\
Consider the case $q\neq0$. 
We know 
\begin{equation}\label{eq1}
f(x) = \left(x^2+\frac{p}{2}+m\right)^2 - 2m\left(x-\frac{q}{4m}\right)^2+\frac{D(m)}{8m}.
\end{equation}
Eq. (\ref{eq1}) and $\mathbf{(C0)}$ imply that for any $m<0$, 
\begin{equation*}
    \label{eq201}
    \frac{\left(a(m)\right)^2}{(4m)^4} + \frac{D(m)}{8m} = f\left(\frac{q}{4m}\right)>0,
\end{equation*}
where $a(m)\doteq 8m^2(2m+p) + q^2$. Thus, for $m<0$, 
\begin{equation}\label{eq202}
    D(m)<-\frac{\left(a(m)\right)^2}{32m^3}.
\end{equation}
Clearly, $a(0)=q^2>0$ and $\lim_{m\to-\infty}a(m)=-\infty$. Thus, the intermediate value theorem(IVT) implies that there exists a negative $m_*$ satisfying $a(m_*)=0$. Combining this property and Eq.$(\ref{eq202})$ yields 
\begin{equation*}
    \label{eq203}
    D(m_*)<0.
\end{equation*}
$(\mathbf{C1} \Rightarrow \mathbf{C2})$

Consider the case $q=0$. The equation $D(m)=0$ has three roots $\displaystyle -p/2 - {r}^{1/2}, -p/2+{r}^{1/2}, \text{ and } 0.$ Condition $\mathbf{(C1)}$ implies that $\displaystyle-p/2-{r}^{1/2} <0$ and $r>0$. Moreover, if $-p/2+{r}^{1/2}=0$, then $D(m)=0$ has one negative root and one zero root. Thus, the equation $D(m)=0$ has at least two distinct real roots
and at least one of them is a negative single root.\\Consider the case $q\neq0$. Since $D(0)=q^2>0$ and $\lim_{m \to \infty}D(m)=-\infty$, the IVT implies there exists at least one positive single root. If there exists three positive roots, $D(m)>0$ for any $m<0$. It is a contradiction. Thus, there exists only one positive root and two negative roots. Since there exists a negative $m$ such that $D(m)<0$, the two negative roots should be distinct.
\\
$(\mathbf{C2} \Rightarrow \mathbf{C3})$

The roots of the quadratic equation $\displaystyle D'(m)=-8\left(3m^2+2pm+p^2/4-r\right)=0$ are 
\begin{equation}
\label{m_minus}
m_{-} \doteq \frac{-2p - (p^2+12r)^{1/2}}{6} \quad \text{  and }\quad \displaystyle m_{+} \doteq \frac{-2p + (p^2+12r)^{1/2}}{6}.
\end{equation}
Consider the case $q=0$. 
Condition $\mathbf{(C2)}$ says that the equation $D(m)=0$ has a negative single root $-p/2-{r}^{1/2}$.
Hence $$m_- \in \left(-\frac{p}{2}-{r}^{1/2}, \min \left\{-\frac{p}{2}+{r}^{1/2}, 0\right\}\right), \;D(m_-)<0, \;\text{ and } \; D(m_+)\geq0.$$
Consider the case $q\neq 0$. We know that the equation $D(m)=0$ has two distinct negative roots and one positive root by the IVT. Thus, $$m_-<0, \quad D(m_-)<0, \; \text{ and } \; D(m_+) \geq 0.$$
$(\mathbf{C3} \Rightarrow \mathbf{C4})$

Consider the case $D(m_+)>0$.
The condition $D(m_-)<0$ implies
\begin{align}
\label{d_m_minus}
\frac{(2p^3-72pr+27q^2) - 2(p^2+12r)^{{3}/{2}}}{27} = D(m_-) <0.
\end{align}
Hence,  
\begin{equation}\label{eq41}
 (2p^3-72pr+27q^2)-2(p^2+12r)^{{3}/{2}}<0.
\end{equation}
Since $D(m_+)>0$,
\begin{equation}\label{eq42}
(2p^3-72pr+27q^2) + 2(p^2+12r)^{{3}/{2}} > 0.
\end{equation}
Combining (\ref{eq41}) and (\ref{eq42}) yields
\begin{equation}\label{eq43}
     (2p^3-72pr+27q^2)^2 - 4(p^2+12r)^3 < 0.
\end{equation}
We can easily show that the LHS of (\ref{eq43}) equals -27$\Delta$. Thus, 
the conditions $D(m_-)<0$ and $D(m_+)>0$ together are equivalent to
\begin{equation}\label{eq44}
\Delta>0.
\end{equation}
It is trivial that
\begin{equation*}\label{eq501}
    \{m_- < 0\} = \{ ({p^2+12r})^{1/2} > -2p \} = \{ p>0 \} \cup \{ (p\leq0) \; \mathrm{and} \; (4r > p^2) \}.
\end{equation*}
Thus, 
\begin{equation}\label{eq502}
    \{m_- < 0\} = \{ (\Delta_P>0) \; \mathrm{or} \; (\Delta_D>0) \}.
\end{equation}
Consider the case $D(m_+)=0$. If $m_+<0$, then $D(m)$ is decreasing for $m>m_+$, which implies $D(0)<D(m_+)=0$. However, this contradicts the fact that $D(0)=q^2\geq0$. If $m_+>0$, then $D(m)$ is increasing for $m_-<m<m_+$. $D(0)<D(m_+)=0$. It contradicts to $D(0)\geq 0$. Therefore, it must follow that $m_+=0$. Since $m_+=0$, the negative root of $D(m)$ is $-p$. Consequently, $\Delta_P=p>0$ and $p^2=4r$, and then $\Delta_D=0$. Substituting $p^2=4r$ into the equation $D(m_+)=0$ yields $\Delta_Q=q=0$.
These conditions yield $\mathbf{(C4)}$.
\\
$(\mathbf{C4} \Rightarrow \mathbf{C5})$

Consider the case $\{(\Delta_D=0) \cap (\Delta_P>0) \cap (\Delta_Q=0)\}$. Clearly, $D(m_+)=0$ and $D(m_-)<0$ and $m_-<0$.\\
Consider the case $\{\Delta > 0\} \cap \bigl\{ (\Delta_D > 0) \; \mathrm{ or }\;(\Delta_P > 0) \bigr\}$.
From Eqs. (\ref{eq41}) $\sim$ (\ref{eq44}), we know that the condition $\Delta>0$ is equivalent to 
\begin{equation*}\label{eq51}
D(m_+)>0 \; \text{ and }\; D(m_-)<0.
\end{equation*}
Hence, the equation $D(m)=0$ has at least two distinct real roots. Eq. (\ref{eq502}) says that if either $\Delta_P>0$ or $\Delta_D>0$, then
\begin{equation}
\label{eq8}
    m_- < 0.
\end{equation}
Since the equation $D(m)=0$ has at least two distinct real roots, Eq. (\ref{eq8}) implies that the equation $D(m)=0$ has at least one negative single root. Denote its smallest root by $m_0$. The inequality $D(0)=q^2\geq0$ implies $m_- \in (m_0, 0)$ and $D(m_-)<0$. Thus
\begin{equation}
\label{eq401}
g_{m_-}(x) = 2m_-\left(x-\frac{q}{4m_-}\right)^2 - \frac{D(m_-)}{8m_-} \leq - \frac{D(m_-)}{8m_-} <0 \quad \forall x \in \mathbb{R}.
\end{equation}
Clearly, 
\begin{equation}
\label{eq402}
    h_{m_-}(x) = \left(x^2+\frac{p}{2}+m_-\right)^2 \geq 0 \quad \forall x \in \mathbb{R}.
\end{equation}
Combining Eqs. (\ref{eq401}) and (\ref{eq402}) yields $\mathbf{(C5)}$.
\\
$(\mathbf{C5} \Rightarrow \mathbf{C0})$

For a negative $m$ satisfying $h_m(x)\geq 0$ and $g_m(x)<0$ for any $x \in \mathbb{R}$,
\begin{equation*}
f(x)=h_{m}(x)-g_{m}(x)>0 \quad \forall x \in \mathbb{R}.
\end{equation*}
Thus, $\mathbf{(C0)}$ holds.
\end{proof}

The equivalence of $\mathbf{(C0)}$ and $\mathbf{(C4)}$ is well-known. See, e.g.,\ \cite{rees1922graphical}, \cite{dickson1917elementary} and \cite{lazard1988quantifier}. Up to now we use the reduced quartic polynomial. For a general quartic polynomial $ax^4+bx^3+cx^2+dx+e=0 \; (a>0)$, let $z=x+b/(4a)$, then it becomes $z^4+pz^2+qz+r=0$, with coefficients
\begin{align}\label{p}
p &= \frac{8ac-3b^2}{8a^2}, \\
\label{q}
q &= \frac{b^3-4abc+8a^2d}{8a^3}, \\
\label{r}
r &= \frac{256a^3e-64a^2bd+16ab^2c-3b^4}{256a^4}.
\end{align}
Using Eqs. (\ref{p}) $\sim$ (\ref{r}), we can apply Theorem\ref{theorem1} to a general quartic polynomial.
\section{Gram--Charlier Positivity}
\subsection{Related Works}
Historically, ensuring the positivity of the Gram--Charlier density relied on Monte Carlo simulations or numerical methods to define feasible parameter regions. See, e.g.,\ \cite{barton1952conditions}, \cite{Draper1972biometrika}, \cite{balitskaya1988biometrika}, \cite{jondeau2001gram}, and \cite{del2012gram}. Despite the significance of the analytical solutions provided by \cite{kwon2019positive, kwon2022analytic}, the formulations involve inherent algebraic complexities that hinder straightforward application.

\subsection{Valid Region}
The probability density function of the Gram--Charlier expansion is given by
$$
f(z) = P(z)\phi(z),
$$
where $\phi(z)$ denotes the standard normal density function and $$P(z)=1+\eta_3He_3(z)+\eta_4He_4(z).$$
Here, $He_3(z)=z^3-3z \text{ and } He_4(z)=z^4-6z^2+3$ are the third- and the fourth-order probabilist's Hermite polynomials, respectively. It is necessary to assume that $P(z)>0, 
\forall z \in \mathbb{R},$ i.e.,
$$P(z) = \eta_4z^4+\eta_3z^3-6\eta_4z^2-3\eta_3z+3\eta_4+1 > 0, \; \forall z\in \mathbb{R}.$$
Applying to Eqs. (\ref{p}) $\sim$ (\ref{r}), we obtain $p=-3\left(\eta_3^2+16\eta_4^2\right)\big/\left(8\eta_4^2\right)$, $q=\eta_3^3\big/\left(8\eta_4^3\right)$, and
\\
$r=-\left(3\eta_3^4-96\eta_3^2\eta_4^2-768\eta_4^4-256\eta_4^3\right)\big/\left(256\eta_4^4\right)$.
To make use of \textbf{(C4)}, we need
\begin{flalign}
&\hspace*{-0.0cm}
\Delta =
\begin{aligned}[t]
&\left(108 \eta_{3}^{6}
+ 1620 \eta_{3}^{4} \eta_{4}^{2}
+ 108 \eta_{3}^{4} \eta_{4}
- 27 \eta_{3}^{4}
+ 10368 \eta_{3}^{2} \eta_{4}^{4} \right. \\
&\quad \left.
- 288 \eta_{3}^{2} \eta_{4}^{2}
+ 27648 \eta_{4}^{6}
- 2304 \eta_{4}^{4}
+ 256 \eta_{4}^{3}
\right)\Big/ \eta_4^6 ,
\end{aligned}
&& \label{eq:Delta} \\
&\hspace*{-0.0cm}
\Delta_D =
\left(
- 3 \eta_{3}^{4}
- 48 \eta_{3}^{2} \eta_{4}^{2}
- 384 \eta_{4}^{4}
+ 64 \eta_{4}^{3}
\right)
\Big/
\left(16\eta_4^4\right),
&& \label{eq:DeltaD} \\
&\hspace*{-0.0cm}
\Delta_P =
-3
\left(\eta_{3}^2 + 16\eta_{4}^{2}\right)
\Big/
\left(8\eta_4^2\right),
&& \label{eq:DeltaP} \\
&\hspace*{-0.0cm}
\Delta_Q =
\left(\eta_{3}^{3}\right)
\Big/
\left(8\eta_4^3\right).
&& \label{eq:DeltaQ}
\end{flalign}
The $\Delta$ and $\Delta_D$ conditions make it difficult to apply \textbf{(C4)} to the positivity problem.

Instead of applying Eqs. (\ref{eq:Delta}) $\sim$ (\ref{eq:DeltaQ}) to \textbf{(C4)}, \cite{kwon2022analytic} derived the analytic positivity region through solving a cubic equation. The resulting analytical expression inherently involves nested radical functions and cosine and arc cosine terms.

By employing \textbf{(C3)}, we can present a simpler representation of the positive Gram--Charlier density. First, we obtain $m_-$ and $D(m)$ through Eqs. (\ref{m_minus}) and (\ref{d_m_minus}). Then, obtain the representation via \textbf{(C3)} as
\[
\begin{cases}
\displaystyle 0 < \eta_4 < \frac{1}{6}, & \\
\displaystyle m_- = \left(3\eta_3^2 + 48\eta_4^2 - 48^{1/2}\eta_4\left(3\eta_3^2 + 24\eta_4^2 + 4\eta_4\right)^{{1}/{2}} \right)\bigg/\left(24\eta_4^2\right) < 0, & \\
\displaystyle D(m_-) = \left(6\eta_3^2\eta_4 + \eta_3^2 + 32\eta_4^3 + 16\eta_4^2 - (27/4)^{-1/2}\left(3\eta_3^2 + 24\eta_4^2 + 4\eta_4\right)^{{3}/{2}}\right)\bigg/\eta_4^3 < 0. &
\end{cases}
\]
Since $\eta_4>0$, the inequalities become
\[
\begin{cases}
\displaystyle 0<\eta_4<\frac{1}{6}, \\
\displaystyle 3\eta_3^2+48\eta_4^2 -48^{1/2}\eta_4\left(3\eta_3^2 + 24\eta_4^2 + 4\eta_4\right)^{{1}/{2}}<0, \\
\displaystyle 6\eta_3^2\eta_4  + \eta_3^2 + 32\eta_4^3  + 16\eta_4^2 - {(27/4)^{-3/2}}\left(3\eta_3^2 + 24\eta_4^2 + 4\eta_4\right)^{{3}/{2}}<0.
\end{cases}
\]
The intersection of above conditions is depicted in Fig. \ref{fig:gram-charlier}.
\begin{figure}[!ht]
    \centering    \includegraphics[width=0.7\textwidth]{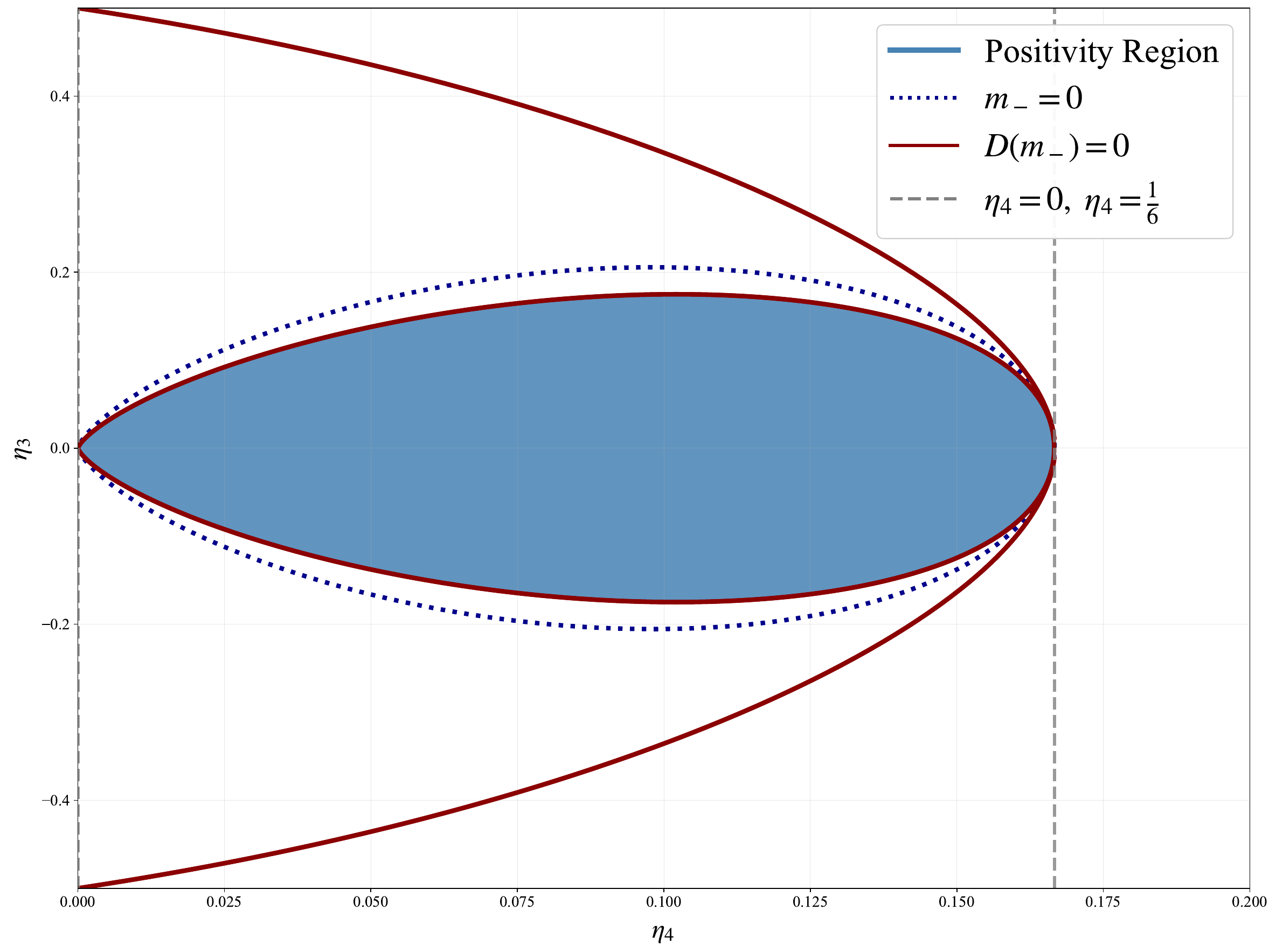}
    \caption{Gram--Charlier positivity region.
    \\
\textbf{Alt text}:The plot shows the valid parameter region in the 
    $(\eta_4,\eta_3)$ plane. The region is symmetric about $\eta_3=0$ and is bounded by the curves corresponding to 
    $m_-=0$ and $D(m_-)=0$.
    }
    \label{fig:gram-charlier}
\end{figure}
This representation is substantially more concise than the one by \cite{kwon2022analytic}, since the former eliminates the need for nested radicals and trigonometric functions appearing in the latter.
Therefore, the valid region is given by the above inequalities, together with the boundary point $(\eta_3,\eta_4)=(0,0)$, which corresponds to the standard normal density.

\section*{Acknowledgement}
Jung Chan Lee is also affiliated with the Institute of Medical and Biological Engineering, Medical Research Center, Seoul National University, Seoul 03080, Republic of Korea.
\bibliographystyle{preprint}
\bibliography{paper-ref}
\end{document}